\newtheorem{theorem}{Theorem}
\newtheorem{proposition}{Proposition}
\newtheorem{remark}{Remark}
 \newtheorem{corollary}{Corollary}
\newcommand{\Rmnum}[1]{\expandafter\@slowromancap\romannumeral #1@}
\begin{document}
\title{\huge A Recursion-Based SNR Determination Method for Short Packet Transmission: Analysis and Applications}

\author{\IEEEauthorblockN{Chengzhe~Yin, Rui~Zhang$^*$,~\IEEEmembership{Member,~IEEE,}
Yongzhao~Li$^*$,~\IEEEmembership{Senior~Member,~IEEE,} Yuhan~Ruan,~\IEEEmembership{Member,~IEEE}, Tao~Li,~\IEEEmembership{Member,~IEEE}, and Jiaheng Lu}

\vspace{-10pt}

\thanks{
Chengzhe Yin, Rui Zhang, Yongzhao Li, Yuhan Ruan, Tao Li, and Jiaheng Lu (graduate student) are with the School
of Telecommunications Engineering, Xidian University, Xi'an 710071, China (E-mail: czyin@stu.xidian.edu.cn; rz@xidian.edu.cn; yzhli@xidian.edu.cn; yhruan@xidian.edu.cn; taoli@xidian.edu.cn; 16010188039@stu.xidian.edu.cn).
}
}
\maketitle

\vspace{-10pt}
\begin{abstract}
The short packet transmission (SPT) has gained much attention in recent years. In SPT, the most significant characteristic is that the finite blocklength code (FBC) is adopted. With FBC, the signal-to-noise ratio (SNR) cannot be expressed as an explicit function with respect to the other transmission parameters. This raises the following two problems for the resource allocation in SPTs: (i) The exact value of the SNR is hard to determine, and (ii) The property of SNR w.r.t. the other parameters is hard to analyze, which hinders the efficient optimization of them. To simultaneously tackle these problems, we have developed a recursion method in our prior work. To emphasize the significance of this method, we further analyze the convergence rate of the recursion method and investigate the property of the recursion function in this paper. Specifically, we first analyze the convergence rate of the recursion method, which indicates it can determine the SNR with low complexity. Then, we analyze the property of the recursion function, which facilitates the optimization of the other parameters during the recursion. Finally, we also enumerate some applications for the recursion method. Simulation results indicate that the recursion method converges faster than the other SNR determination methods. Besides, the results also show that the recursion-based methods can almost achieve the optimal solution of the application cases.

\end{abstract}
\vspace{-5pt}
\begin{IEEEkeywords}
short packet transmission, finite blocklength, SNR determination, convergence analysis, resource allocation
\end{IEEEkeywords}
\IEEEpeerreviewmaketitle
\vspace{-10pt}
\section{Introduction}
Recently, the short packet transmission (SPT) has gained much attention due to the demands of mission-critical tasks~\cite{ref31e}~--\cite{refUAV}. A significant characteristic of SPT is that the finite blocklength code (FBC) is adopted~\cite{ref3}. With FBC, the traditional Shannon's capacity is not applicable since the law of large number is no longer valid. Instead, there exists a backoff from the traditional Shannon's capacity, and the backoff can be characterized by a parameter referred to as channel dispersion~\cite{ref11}.

Owing to the existence of channel dispersion, block error rate (BLER) is no longer negligible, and needs to be taken into account in the expression of FBC achievable rate, which renders the coupling relationship between communication parameters, i.e., BLER, signal-to-noise ratio (SNR), blocklength, and packet size, extremely complicated~\cite{ref11}. Following the FBC achievable rate, the authors in~\cite{ref10b} further pointed out that the SNR cannot be expressed as an explicit function with respect to the other parameters. This raises an interesting SNR determination problem: What is the minimum SNR to meet the certain transmission requirement, i.e., the other three parameters are given, for the SPTs.


To address this problem, the authors in~\cite{p2} approximated the channel dispersion as a constant to make the SNR expression explicit. Unfortunately, when applying this method, the approximation error exists and it increases as the SNR decrease, which makes the method more applicable for the high-SNR case. For any range of SNR, some iteration methods, i.e., fixed point iteration~\cite{fix1} and the bisection method~\cite{ref10b}, were proposed to determine the exact SNR value. Nevertheless, both the above two methods have only linear convergence rate, which can be further improved. Solving the problem from another aspect, the authors in~\cite{p1} derived the analytical solution for the implicit SNR on the basis of the extended lambert $\mathcal{W}$ function. Compared with the methods in~\cite{ref10b} and~\cite{fix1}, the analytical solution is much more intuitive. However, since the analytical solution is the sum of infinite complicated polynomials, it is still not efficient enough for the SNR determination. To determine the exact value of SNR, we also proposed a recursion method in our prior work~\cite{myp1}. In this method, the recursion function is an extremely tight approximation for the implicit SNR function, which gives the method potential to determine the SNR efficiently. However, we have not analyzed the recursion method from this aspect in~\cite{myp1}.

In addition to the SNR determination, the implicity of SNR also raises another problem in resource allocation: The property, e.g., convexity and monotonicity, of SNR w.r.t. the other parameters is hard to analyze. This further hinders the efficient optimization of these parameters to achieve some specific goals, e.g. power consumption minimization~\cite{myp1},\cite{myp}. Fortunately, this problem can also be solved by using the recursion method. The reason is that the recursion function is not only an explicit function that is easier to analyze, but it is also an upper bound approximation of the implicit SNR. This facilitates the optimization of the other transmission parameters during the recursion, on the basis of the majorization-minimization (MM) optimization framework~\cite{mml}\footnote{The other three exact SNR determination methods cannot fulfill the conditions of using MM. Therefore, when using these methods to determine the SNR, the other parameters cannot be optimized on the basis of the MM optimization framework.}.
In this regard, we have verified the packet size can be efficiently optimized during the recursion~\cite{myp1}. It is of interest to further investigate whether other parameters can also be efficiently optimized during the recursion.

Against this background, we further analyze the recursion method in-depth in this paper. Specifically, we first theoretically analyze the convergence rate of the method, which illustrates that the recursion method can determine the SNR with low complexity. Then, to efficiently optimize the other parameters during the recursion, we analyze the property of the recursion function w.r.t. these parameters. The main contributions are as follows.
\begin{itemize}
\item We analyze the convergence rate of the recursion method. Specifically, we first prove that the recursion method achieves quadratic convergence rate for the SNR determination. Then, for the corresponding quadratic convergence factor, we also prove that it is smaller than $1$ for a wide range of SNR, i.e., $[0.25,\infty)$. Combined with the convergence rate analysis, we also illustrate that the recursion method converges faster than the other SNR determination methods by simulation.
\item
   We analyze monotonicity and convexity of the recursion function w.r.t. the packet size and BLER. Specifically, it is proved that the recursion function is strictly monotony and convex w.r.t. both of them, separately. Besides, the joint convexity of the recursion function w.r.t. these two parameters is also verified for the typical SPT configurations. Combined with the above property analysis, we also enumerate some applications for the recursion method.
\end{itemize}

The remainder of this paper is organized as follows. In Section II, we analyze the capacity of the recursion method in SNR determination from the aspect of convergence rate. In Section III, we investigate the optimization of the other parameters during the recursion. Finally, Section IV concludes this paper.

\vspace{-10pt}
\section{The Recursion Method for SNR determination}
In this section, we first state the SNR determination problem and introduce the recursion method developed in our prior work. Then, we further analyze the convergence rate of the recursion method, which indicates that it can determine the SNR with low complexity.

\vspace{-10pt}
\subsection{The SNR Determination Problem}
According to~\cite{ref11}, the normal approximation of the upper bound achievable rate under FBC is given by
\begin{equation}
R = \frac{N}{m} \approx {\log _2}(1 + \gamma ) - \sqrt {\frac{V}{m}} \frac{{{Q^{ - 1}}(\varepsilon )}}{{\ln 2}}, \label{dss}
\end{equation}
where ${N}$ is the packet size, ${m}$ is the blocklength, $\gamma$ is the SNR, ${V}$ is the channel dispersion given by ${V = 1 - 1/{{{{(1 + \gamma )}^2}}}}$, ${Q( \cdot )}$ is the Gaussian Q-function, and ${\varepsilon }$ is the BLER. The results in~\cite{ref11} indicate that the normal approximation is accurate when $m>\ddot m$, where $\ddot m\approx 20$.



According to~\eqref{dss}, $\gamma$ cannot be directly expressed as an explicit function w.r.t. the other three parameters. For ease of expression, we denote the coupling relationship between $\gamma$ and the other three parameters by the following implicit function, i.e., $\gamma = {\Gamma (N,m,\varepsilon )}$.
According to Theorem~1 in~\cite{ref10b}, ${\Gamma (N,m,\varepsilon )}$ is continuous and differentiable. Besides, owing to the monotonicity of  $N$~\cite[Proposition~1]{ref10b}, $m$~\cite[Proposition~1]{ref37}, and $\varepsilon$~\cite[Proposition~1]{ref37o} w.r.t. $\gamma$, $\Gamma (N,m,\varepsilon )$ is unique for any given $N>0$, $m>0$, and $0<\varepsilon<0.5$.

The SNR determination problem is to determine the value of ${\Gamma (N,m,\varepsilon )}$ when the other three parameters are given.
\vspace{-10pt}
\subsection{The Recursion Method}

Inspired by the approximation of channel dispersion developed in~\cite{refhr}, we have designed a recursion method to determine ${\Gamma (N,m,\varepsilon )}$ in our prior work~\cite{myp1}. In this method, the recursion function, which is referred to as exponential approximation recursion (EAR) function, is given by
\begin{equation}
\dot \gamma^{(j)} \!=\! \tilde \Gamma(N,m,\varepsilon,\dot \gamma^{(j-1)})\triangleq {\exp\left(\frac{{\frac{N}{m}\ln 2 + \mu (\dot \gamma^{(j-1)} )b}}{{1 - \rho (\dot \gamma^{(j-1)} )b}}\right)} - 1, \label{ap}
\end{equation}
where $x\triangleq y$ means $x$ is defined by $y$, and $\dot \gamma^{(j)}$ denotes the recursion solution for $\gamma$ in $j$-th round. In~\eqref{ap}, coefficients $q$ and $b$ are defined as $q\triangleq Q^{-1}(\varepsilon)$ and $b\triangleq q/\sqrt m$, respectively; Functions $\rho(\dot \gamma)$ and $\mu(\dot \gamma)$ are defined as follows
\begin{equation}
\rho(\dot \gamma)  \triangleq \frac{1}{(1 +\dot   \gamma){\sqrt {{{\dot  \gamma}^2} + 2\dot  \gamma} }}, \label{app2}
\end{equation}
and
\begin{equation}
\mu(\dot  \gamma)  \triangleq \sqrt {1 - \frac{1}{{{{(1 + \dot  \gamma)}^2}}}} - \frac{\ln (1 + \dot  \gamma)}{{(1 + \dot  \gamma)\sqrt {{{\dot  \gamma}^2} + 2\dot  \gamma} }}. \label{app3}
\end{equation}

Note that ${1 - \rho (\dot \gamma )b}$, $\rho(\dot \gamma)$, and $\mu(\dot \gamma)$ are all positive for any feasible $\dot \gamma$. As for the first term, its positivity can be checked as follows
\begin{equation}
\begin{split}
1 - \frac{b}{{\left( {\dot \gamma  + 1} \right)\sqrt {{{\dot \gamma }^2} + 2\dot \gamma } }}& \ge 1 - \frac{b}{{\left( {\bar \gamma  + 1} \right)\sqrt {{{\bar \gamma }^2} + 2\bar \gamma } }}\\
&\mathop  = \limits^{(a)} 1 - \frac{{\ln (1 + \bar \gamma )}}{{{{\bar \gamma }^2} + 2\bar \gamma }}\mathop  \ge \limits^{(b)} 0, \label{5a}
\end{split}
\end{equation}
where ${\bar \gamma }$ is the threshold $\gamma$ such that $0 = \ln (1 + \bar \gamma ) - \frac{{\sqrt {{{\bar \gamma }^2} + 2\bar \gamma } }}{{\bar \gamma  + 1}}b$, which yields (a); (b) holds true for $\bar \gamma \geq0$. Note that $\dot \gamma \geq \bar \gamma$ is also the feasible region for ($\dot \gamma$)\footnote{An error was made in~\cite{myp1} that $\dot \gamma$ must be feasible, i.e., $\dot \gamma \ge \bar \gamma$, to guarantee that the EAR function is an upper bound of ${\Gamma (N,m,\varepsilon )}$. This can be easily proved by checking the derivative of $\tilde \Gamma(\dot \gamma)$.}. As for the last term, we have
\begin{equation}
\mu (\dot\gamma )\mathop  \ge \limits^{(a)} \mu (\bar \gamma )\mathop  \ge \limits^{(b)} 0,
\end{equation}
where (a) holds since $\mu(\dot\gamma)$ is strictly increasing (this can be verified by analyzing its first-order derivative); (b) can be proved as similar as~\eqref{5a}.

The recursion method starts from any feasible point that is greater than ${\Gamma (N,m,\varepsilon )}$, e.g., $\dot\gamma^{(0)} = \hat \gamma \triangleq \exp \left( {\frac{N}{m}\ln 2 + b} \right) -1$, and finally converges to ${\Gamma (N,m,\varepsilon )}$~\cite[Theorem~2]{myp1}.
\vspace{-10pt}
\subsection{Convergence Rate Analysis}
In this subsection, we further investigate the convergence rate of the recursion method by analyzing its convergence factor. For simplicity, we denote $\Gamma(N,m,\varepsilon)$ by $\gamma$ in this subsection.


The convergence factor $\mathcal{{\cal Q}}_p$ is defined as follows~\cite{com1}
\begin{equation}
{\mathcal{{\cal Q}}_p} = \mathop {\lim }\limits_{t \to \infty } \frac{{\left| {{y^ * } - {y_{t + 1}}} \right|}}{{{{\left| {{y^ * } - {y_t}} \right|}^p}}}, \label{qp}
\end{equation}
where $p$ corresponds to the convergence order, $y^ *$ denotes the solution of the problem, and $y_{t}$ denotes the solution in $t$-th round of iteration (or recursion).

If $0<\mathcal{{\cal Q}}_1 <1$, then the algorithm has linear convergence rate; If $\mathcal{{\cal Q}}_1 = 0$, then the algorithm has superlinear convergence rate; If $0<\mathcal{{\cal Q}}_2 <\infty$, then the algorithm has quadratic convergence rate. On this basis, we analyze the convergence rate of the recursion method in the following theorem.
\vspace{-5pt}
\begin{theorem}
The recursion method has quadratic convergence rate, and the corresponding convergence factor is given by
\begin{equation}
g_1 \triangleq \frac{{2{\gamma ^2} + 4\gamma  + 1}}{{2{{(1 + \gamma )}^2}\left( {{\gamma ^2} + 2\gamma } \right)\sqrt {{\gamma ^2} + 2\gamma } \left( {1 - \frac{b}{{(1 + \gamma )\sqrt {{\gamma ^2} + 2\gamma } }}} \right)}}b .
\end{equation}
\end{theorem}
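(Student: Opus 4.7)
The plan is to treat the recursion in~\eqref{ap} as a scalar fixed-point iteration $\dot\gamma^{(j)} = f(\dot\gamma^{(j-1)})$, with $f(\dot\gamma)\triangleq \tilde\Gamma(N,m,\varepsilon,\dot\gamma)$ and exact fixed point $\gamma = \Gamma(N,m,\varepsilon)$. Taylor expanding $f$ about $\gamma$ and using $f(\gamma)=\gamma$ gives $\dot\gamma^{(j)}-\gamma = f'(\gamma)(\dot\gamma^{(j-1)}-\gamma) + \tfrac{1}{2}f''(\gamma)(\dot\gamma^{(j-1)}-\gamma)^2 + o((\dot\gamma^{(j-1)}-\gamma)^2)$; by the definition in~\eqref{qp}, the method is then quadratically convergent with $\mathcal{Q}_2 = |f''(\gamma)|/2$ as soon as $f'(\gamma)=0$ and $f''(\gamma)\neq 0$. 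The proof therefore reduces to two sub-claims: (i)~$f'(\gamma) = 0$, and (ii)~$|f''(\gamma)|/2 = g_1$.

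For (i), I would write $f(\dot\gamma)=\exp(A(\dot\gamma)/B(\dot\gamma))-1$ with $A\triangleq (N/m)\ln 2 + \mu b$ and $B\triangleq 1-\rho b$, so that the fixed-point equation $\tilde\Gamma(N,m,\varepsilon,\gamma)=\gamma$ reads $A(\gamma) = B(\gamma)\ln(1+\gamma)$. Using the shorthand $u\triangleq 1+\dot\gamma$, $v\triangleq \sqrt{\dot\gamma^2+2\dot\gamma}$ (so $v^2=u^2-1$ and $v'=u/v$), one checks from~\eqref{app3} that $\mu(\dot\gamma)=v/u-\rho(\dot\gamma)\ln u$ and, exploiting the elementary identity $1/(u^2v)=\rho/u$, arrives at
\begin{equation}
\mu'(\dot\gamma) + \rho'(\dot\gamma)\ln(1+\dot\gamma) = 0, \label{keyeqn}
\end{equation}
valid for every feasible $\dot\gamma$. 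Combined with $A(\gamma)=B(\gamma)\ln(1+\gamma)$, identity~\eqref{keyeqn} forces $A'(\gamma)B(\gamma)-A(\gamma)B'(\gamma) = bB(\gamma)[\mu'(\gamma)+\rho'(\gamma)\ln(1+\gamma)] = 0$, whence $f'(\gamma) = \exp(A/B)\cdot(A/B)'|_{\gamma}=0$.

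For (ii), because $f'(\gamma)=0$ the squared-derivative term in $f''=\exp(A/B)[((A/B)')^2+(A/B)'']$ drops out at $\gamma$, so $f''(\gamma)=(1+\gamma)(A''B-AB'')/B^2$ evaluated at $\dot\gamma=\gamma$. Differentiating~\eqref{keyeqn} once more yields $\mu''+\rho''\ln(1+\dot\gamma)=-\rho'/(1+\dot\gamma)$, which together with $A=B\ln(1+\gamma)$ collapses $A''B-AB''$ to $-bB(\gamma)\rho'(\gamma)/(1+\gamma)$ and hence $f''(\gamma) = -b\rho'(\gamma)/B(\gamma)$. A short calculation from~\eqref{app2} gives $\rho'(\gamma)=-(2\gamma^2+4\gamma+1)/[(1+\gamma)^2(\gamma^2+2\gamma)^{3/2}]<0$, so $f''(\gamma)>0$; substituting this and the explicit form of $B(\gamma)$ into $\mathcal{Q}_2=f''(\gamma)/2$ reproduces the announced expression for $g_1$.

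The hard part will be spotting and proving~\eqref{keyeqn}. Without that cancellation, the derivatives of $\mu$ and $\rho$ mix logarithmic and algebraic terms whose sum does not visibly simplify, and neither $f'(\gamma)=0$ nor the compact form of $f''(\gamma)$ would be apparent. Once~\eqref{keyeqn} and its derivative are in hand, everything else is routine algebra anchored on the fixed-point identity $A(\gamma)=B(\gamma)\ln(1+\gamma)$.
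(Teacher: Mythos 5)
Your proposal is correct and follows essentially the same route as the paper: both arguments Taylor-expand the iteration map about the fixed point $\gamma$, show the first-order term vanishes (superlinear convergence), and identify $\mathcal{Q}_2$ with half the second derivative, arriving at the same $g_1$. The paper organizes the computation through the auxiliary function $J(\dot\gamma)$ with $J(\gamma)=1$ and differentiates by brute force, whereas your identity $\mu'(\dot\gamma)+\rho'(\dot\gamma)\ln(1+\dot\gamma)=0$ is just an explicit, cleaner packaging of the same cancellation the paper performs in step (d) of its first displayed limit.
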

\vspace{-10pt}
\begin{proof}
We first illustrate that the recursion method achieves superlinear convergence rate in~\eqref{1}, which is on the top of next page. In~\eqref{1}, (a) holds owing to~\cite[Theorem~2]{myp1}; (b) holds since the term $\exp( {\frac{{\frac{N}{m}\ln 2 - \ln (1 + \dot \gamma ) + b\sqrt {\dot V} }}{{1 - \rho (\dot \gamma )b}}}) \buildrel \Delta \over = {J(\dot \gamma )}$ is replaced by its first-order Taylor expansion at $\dot \gamma=\gamma$; (c) and (d) hold following from~(1). Then, we further prove that the recursion method has quadratic convergence rate in~\eqref{1s}, which is on the top of next page. In~\eqref{1s}, (a) holds since the term ${J(\dot \gamma )}$ is replaced by its second-order Taylor expansion at $\dot \gamma=\gamma$; (b) can be derived as similar as~\eqref{1}. Following from~\eqref{1s}, it is easy to observe that ${\mathcal{{\cal Q}}_2}$ is bounded for this method if ${1 - \rho (\gamma )b}>0$, which can be proved as similar as~\eqref{5a}. This completes the proof.
\newcounter{mytempeqncnt7}
\begin{figure*}[!t]
\normalsize

\begin{equation}
\begin{split}
&\mathop {\lim }\limits_{k \to \infty } \frac{{{{\dot \gamma }^{(k + 1)}} - \gamma }}{{{{\dot \gamma }^{(k)}} - \gamma }} \mathop  = \limits^{(a)} \mathop {\lim }\limits_{\dot \gamma  \to \gamma } \frac{{\exp \left( {\frac{{\frac{N}{m}\ln 2 + \mu (\dot \gamma )b}}{{1 - \rho (\dot \gamma )b}}} \right) - \gamma  - 1}}{{\dot \gamma  - \gamma }} \mathop   = \mathop {\lim }\limits_{\dot \gamma  \to \gamma } \frac{{J(\dot \gamma )\left( {1 + \dot \gamma} \right) - \gamma  - 1}}{{\left( {\dot \gamma  - \gamma } \right)}}\\
& \mathop  = \limits^{(b)}  \mathop {\lim }\limits_{\dot \gamma  \to \gamma } \frac{{\left( {{J(\gamma )} + {{\left. {{{ {{J^\prime(\dot \gamma )}} } }} \right|}_{\dot \gamma  = \gamma }}\left( {\dot \gamma  - \gamma } \right) + o(\dot \gamma  - \gamma )} \right)\left( {1 + \dot \gamma } \right) - \gamma  - 1}}{{\left( {\dot \gamma  - \gamma } \right)}} \mathop  = \limits^{(c)} 1 +  {\left. {{J^\prime(\dot \gamma ) }} \right|_{_{\dot \gamma  = \gamma }}}\left( {1 + \gamma } \right) \\
&= 1 + \frac{{\left( { - \frac{1}{{1 + \gamma }} + \frac{b}{{{{\left( {1 + \gamma } \right)}^2}\sqrt {{\gamma ^2} + 2\gamma } }}} \right)\left( {1 - \rho (\gamma )b} \right) + \rho '(\gamma )b\left( {\frac{N}{m}\ln 2 - \ln (1 + \gamma ) + b\sqrt V } \right)}}{{{{\left( {1 - \rho (\gamma )b} \right)}^2}}}\left( {1 + \gamma } \right)  \mathop = \limits^{(d)}1 - \frac{{{{\left( {1 - \rho (\gamma )b} \right)}^2}}}{{{{\left( {1 - \rho (\gamma )b} \right)}^2}}} = 0. \label{1}
\end{split}
\end{equation}
\hrulefill
\normalsize
\begin{equation}
\begin{split}
&\mathop {\lim }\limits_{\dot \gamma  \to \gamma } \frac{{\exp \left( {\frac{{\frac{N}{m}\ln 2 + \mu (\dot \gamma )b}}{{1 - \rho (\dot \gamma )b}}} \right) - \gamma  - 1}}{{{{\left( {\dot \gamma  - \gamma } \right)}^2}}} \! \mathop  =  \limits^{(a)} \! \mathop {\lim }\limits_{\dot \gamma  \to \gamma } \frac{{\left( {J(\gamma ) \!+\! {{\left. {J'(\dot \gamma )} \right|}_{\dot \gamma  = \gamma }}\left( {\dot \gamma  - \gamma } \right) \!+\! {{\left. {J''(\dot \gamma )} \right|}_{\dot \gamma  = \gamma }}\frac{{{{\left( {\dot \gamma  - \gamma } \right)}^2}}}{2} + o{{(\dot \gamma  - \gamma )}^2}} \right)\left( {1 + \dot \gamma } \right) - \gamma  - 1}}{{{{\left( {\dot \gamma  - \gamma } \right)}^2}}}\\
&  \mathop  = \limits^{(b)} {{{\left. {J''(\dot \gamma )} \right|}_{\dot \gamma  = \gamma }}}\frac{{1 + \gamma }}{2} - \frac{1}{{1 + \gamma }} + \mathop {\lim }\limits_{\dot \gamma  \to \gamma } \frac{{o{{(\dot \gamma  - \gamma )}^2}\left( {1 + \dot \gamma } \right)}}{{\left( {\dot \gamma  - \gamma } \right)^2}}
= \frac{{2{\gamma ^2} + 4\gamma  + 1}}{{2{{(1 + \gamma )}^2}\left( {{\gamma ^2} + 2\gamma } \right)\sqrt {{\gamma ^2} + 2\gamma } \left( {1 - \frac{b}{{(1 + \gamma )\sqrt {{\gamma ^2} + 2\gamma } }}} \right)}}b. \label{1s}
\end{split}
\end{equation}
\hrulefill
\vspace{-15pt}
\end{figure*}
\end{proof}
\vspace{-5pt}

In what follows, we further investigate the relationship between $g_1$ and $\gamma$. Since the monotonicity of $g_1$ w.r.t. $\gamma$ is hard to check, we analyze the condition when $g_1$ is smaller than a typical value, i.e., $g_1 \leq 1$, in the following corollary.


\vspace{-5pt}
\begin{corollary}
The quadratic convergence factor of the recursion method is smaller than $1$ if
\vspace{-5pt}
\begin{equation}
\frac{{2{\gamma ^2} + 4\gamma  + 1}}{{2(1 + \gamma )\left( {{\gamma ^2} + 2\gamma } \right)}} + 1 - \frac{{{\gamma ^2} + 2\gamma }}{{\ln (1 + \gamma )}} \le 0, \label{9o}
\end{equation}
and the solution of~\eqref{9o} is given by $\gamma  \ge  0.25$.
\end{corollary}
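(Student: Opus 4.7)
The plan is to turn the inequality $g_1 \le 1$ into the scalar condition~\eqref{9o} by eliminating $b$ using the natural feasibility constraint that links $b$ and $\gamma$ at the converged SNR, and then to analyze~\eqref{9o} as a one-variable inequality in $\gamma$.

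First, I would multiply $g_1 \le 1$ through by the positive quantity $2(1+\gamma)^2(\gamma^2+2\gamma)^{3/2}\bigl(1 - \tfrac{b}{(1+\gamma)\sqrt{\gamma^2+2\gamma}}\bigr)$ (positivity of the last factor was already verified in~\eqref{5a}) and collect the terms in $b$ to rewrite $g_1 \le 1$ as the linear-in-$b$ inequality
\begin{equation*}
b\left[\frac{2\gamma^2 + 4\gamma + 1}{2(1+\gamma)(\gamma^2+2\gamma)} + 1\right] \le (1+\gamma)\sqrt{\gamma^2+2\gamma}.
\end{equation*}
To remove the dependence on $b$, I would use that the converged SNR $\gamma=\Gamma(N,m,\varepsilon)$ satisfies~\eqref{dss}, which rearranges to $\frac{N}{m}\ln 2 = \ln(1+\gamma) - b\sqrt{V}$ with $\sqrt{V}=\sqrt{\gamma^2+2\gamma}/(1+\gamma)$. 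Since $N,m>0$, this forces $b \le (1+\gamma)\ln(1+\gamma)/\sqrt{\gamma^2+2\gamma}$. Substituting this upper bound into the displayed inequality and canceling the common factor $(1+\gamma)/\sqrt{\gamma^2+2\gamma}$ on both sides reproduces~\eqref{9o} exactly.

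For the second assertion, I would define
\begin{equation*}
h(\gamma) \triangleq \frac{2\gamma^2 + 4\gamma + 1}{2(1+\gamma)(\gamma^2+2\gamma)} + 1 - \frac{\gamma^2+2\gamma}{\ln(1+\gamma)}
\end{equation*}
and show that $h(\gamma) \le 0$ is equivalent to $\gamma \ge 0.25$ in two stages: (i) check $h(0.25) \le 0$ by direct substitution, and (ii) establish strict monotonicity of $h$ on $(0,\infty)$. For (ii) I would differentiate $h$ termwise and estimate the resulting expression using elementary bounds on the logarithm, such as $\gamma/(1+\gamma) \le \ln(1+\gamma) \le \gamma$, to conclude $h'(\gamma)<0$ throughout the feasible range.

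The main obstacle I expect is step (ii): the derivative $h'(\gamma)$ mixes a rational function with $1/\ln(1+\gamma)$ and $1/\ln^2(1+\gamma)$ terms, so pinning down its sign cleanly may require either an auxiliary logarithmic inequality or a split into small-$\gamma$ and large-$\gamma$ regimes. A safe fallback, should a one-shot monotonicity argument prove awkward, is to study the two sides of~\eqref{9o} separately: show the left-hand side is strictly decreasing while the right-hand side is strictly increasing in $\gamma$, which guarantees a unique crossover point, and then locate it at (approximately) $\gamma = 0.25$ by a short numerical check near the boundary.
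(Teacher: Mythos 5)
Your proposal is correct and follows essentially the same route as the paper: rearranging $g_1\le 1$ into a linear-in-$b$ condition, eliminating $b$ via the feasibility bound $b\le (1+\gamma)\ln(1+\gamma)/\sqrt{\gamma^2+2\gamma}$ implied by $\frac{N}{m}\ln 2=\ln(1+\gamma)-b\sqrt{V}\ge 0$, and then invoking monotonicity of the left-hand side of~\eqref{9o} together with a numerical evaluation near $\gamma=0.25$. The paper likewise only asserts (rather than details) the strict decrease of the left-hand side, so your extra care in step (ii) is, if anything, more explicit than the original.
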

\vspace{-10pt}
\begin{proof}
The inequality $g_1\leq1$ is equivalent to the following inequality
\vspace{-5pt}
\begin{equation}
b \le \frac{{2{{(1 + \gamma )}^2}\left( {{\gamma ^2} + 2\gamma } \right)\sqrt {{\gamma ^2} + 2\gamma } }}{{2(1 + \gamma )\left( {{\gamma ^2} + 2\gamma } \right) + \left( {2{\gamma ^2} + 4\gamma  + 1} \right)}}. \label{eq9}
\end{equation}

Owing to~\eqref{dss}, we have $\frac{N\ln2}{m} = \ln (1 + \gamma ) - \sqrt V b \ge 0$. Therefore, it holds true that $\frac{{\sqrt {{\gamma ^2} + 2\gamma } }}{{(1 + \gamma )\ln (1 + \gamma )}} \le \frac{1}{b}$. On this basis, if the following inequality holds,
\begin{equation}
\begin{split}
&\frac{{2{\gamma ^2} + 4\gamma  + 1}}{{2{{(1 + \gamma )}^2}\left( {{\gamma ^2} + 2\gamma } \right)\sqrt {{\gamma ^2} + 2\gamma } }} + \frac{1}{{(1 + \gamma )\sqrt {{\gamma ^2} + 2\gamma } }} \\
&\le \frac{{\sqrt {{\gamma ^2} + 2\gamma } }}{{(1 + \gamma )\ln (1 + \gamma )}},
\end{split}
\end{equation}
then~\eqref{eq9} must hold. From the above inequality,~\eqref{9o} can be easily derived. Besides, it can be proved that the left hand side of~\eqref{9o} is strictly decreasing w.r.t. $\gamma$. Therefore, one can work out the solution of~\eqref{9o}, which is $\gamma  \ge  0.25$.
\end{proof}
\vspace{-5pt}
In Fig.~\ref{RV}, we illustrate the search error (the error between the iteration value and the real value) versus iteration number and flops (floating point operations) for the four exact SNR determination methods, i.e., the bisection method~\cite{ref10b}, the fixed point iteration~\cite{fix1}, the analytical solution~\cite{p1}, and the recursion method. It shows that the recursion method converges faster than the other three methods\footnote{It is well known that the bisection method has only linear convergence rate. Besides, relying on the Taylor expansion, one can also prove that the fixed point iteration method proposed in~\cite{fix1} has linear convergence rate. Moreover, since the analytical solution is not obtained by iteration, we only compare it in Fig.~1(b). It is worth noting that the operations, including $+$,$-$,$*$,$/$,$\sqrt{(\cdot)}$,$\cdot!$,$\exp( \cdot)$, $\ln(\cdot)$, and $Q^{-1}(\cdot)$, are counted as a single flop. Besides, the elements that occur multiple times are counted only once, e.g., $N*\ln(2)/m$ occurs in each round, but it is counted only once ($3$ flops).}.


\begin{figure}[htbp]
	\centering
	\subfigure[Search error vs. iteration number] {\includegraphics[width=.4\textwidth]{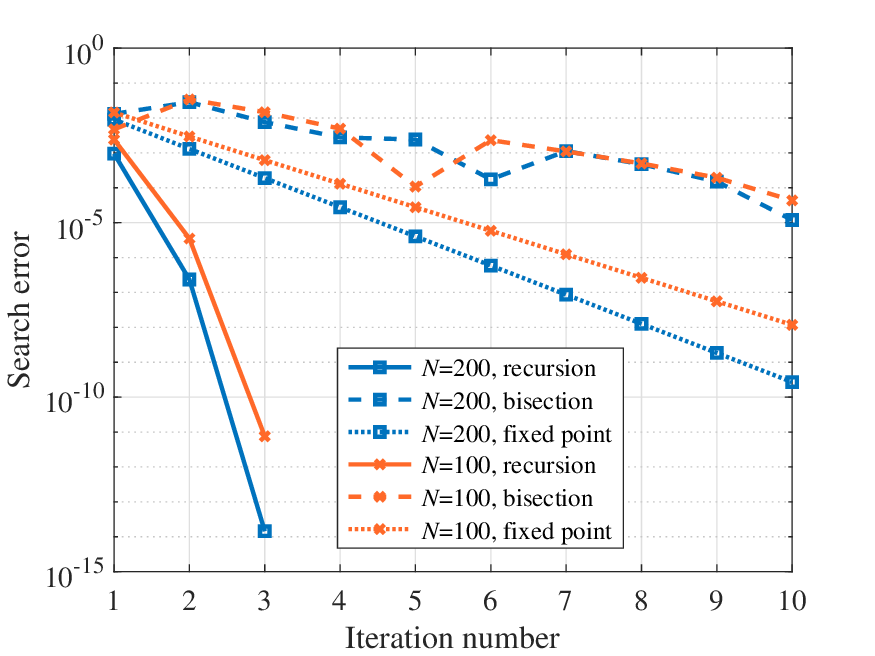}}
	\subfigure[Search error vs. flops]  {\includegraphics[width=.4\textwidth]{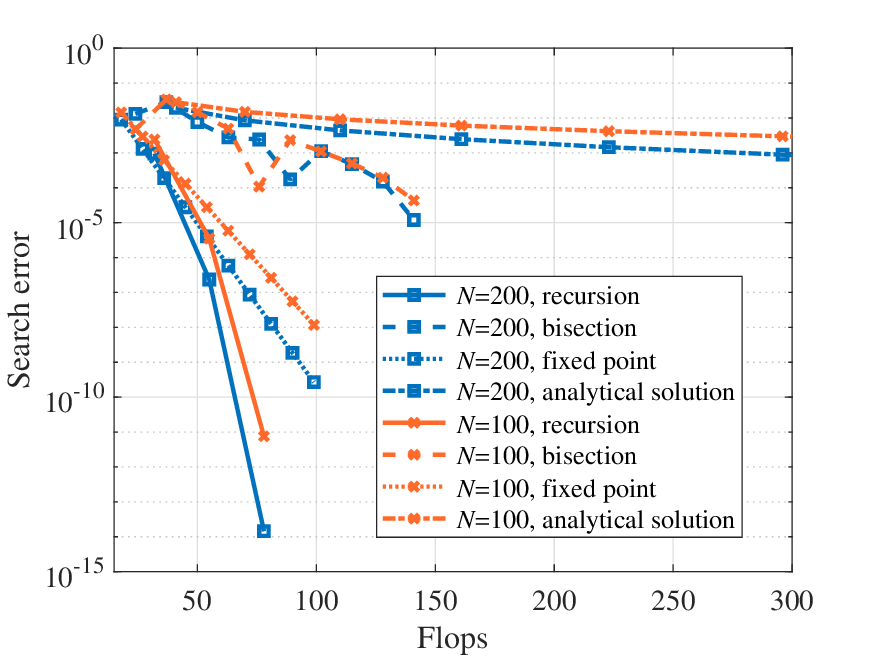}}
\vspace{-10pt}
	\caption{Search error of different methods when $m=1000$ and $\varepsilon = 10^{-5}$.}
\vspace{-10pt}
	\label{RV}
\end{figure}


\begin{remark}
The convergence rate analysis is also correlated to the complexity of the method. Specifically, Theorem~1 indicates that if $\dot \gamma_t \to \gamma$, then $\left| {\gamma - {\dot\gamma_{t + 1}}} \right| =g_1{\left| {\gamma - {\dot\gamma_t}} \right|^2}$. Besides, if $\dot \gamma_t  \ne \gamma$, then there must exist a $\delta>0$ such that $\left| {\gamma - {\dot\gamma_{t + 1}}} \right| \le \delta{\left| {\gamma - {\dot\gamma_t}} \right|^2}$ holds true for any $t$. Combining the above two cases, we have $\left| {\gamma - {\dot\gamma_{t + 1}}} \right| \le \max\{g_1,\delta\}{\left| {\gamma - {\dot\gamma_t}} \right|^2}$. From the above inequality, it is easy to obtain that there exists a $k$ such that for any $t\ge k$, the complexity of the remaining steps is $\mathcal O(\log\log \frac{1}{\omega  })$, where $\omega$ is the accuracy~\cite{com1}. In other words, if the initial point $\dot \gamma_0$ is close to $\gamma$, then the complexity of the overall recursion method is $\mathcal O(\log\log \frac{1}{\omega  })$.
\end{remark}

\vspace{-10pt}
\section{Parameter Optimization During the Recursion}
In the last section, we force on the SNR determination problem in which the other parameters are given and only the SNR needs to be determined. In this section, we further consider the problems in which both the parameters and the SNR have to be optimized to achieve some specific goals, e.g.,
\begin{equation}
\begin{split}
&\mathop {{\min} }\limits_{A,\gamma} f_1(A,\gamma)\\
&\; {\rm s.t.}\;\,f_2(A,\gamma) \le 0\\
&\;\;\;\;\;\;\;\,\gamma=\Gamma(A),
\label{pr1}
\end{split}
\end{equation}
where $A \in \{ N,m,\varepsilon \}$, and $f_1$ and $f_2$ denotes the objective function and the inequality constraint, respectively. For ease of analysis, we hereby assume that $f_1$ and $f_2$ are the linear combinations of $A$ and $\gamma$. Besides, the combination coefficient of the term $\gamma$ is assumed to be non-negative.

Different from the SNR determination problem, the determination of SNR is not the only challenge in~\eqref{pr1}. The implicity of $\Gamma (A)$ also renders the convexity analysis of the overall problem difficult. To simultaneously tackle these problems, an efficient way is to optimize the parameters during the recursion of SNR. Specifically, in $j$-th recursion, the SNR is still updated as per~\eqref{ap}. But instead, the updating here should take into account the optimization of the other parameters, i.e., $\dot \gamma^{(j)} = \tilde \Gamma(A^{(j-1)},\dot \gamma^{(j-1)})$. Then, with the updated $\dot \gamma^{(j)}$, $A^{(j)}$ can be updated by solving the following sub-problem,
\begin{equation}
\begin{split}
&\mathop {{\min} }\limits_A f_1(A,\tilde \Gamma (A,\dot \gamma^{(j)}))\\
&\; {\rm s.t.}\;\,f_2(A,\tilde \Gamma (A,\dot \gamma^{(j)})) \le 0. \label{pr3}
\end{split}
\end{equation}
By repeating these two steps, a Karush-Kuhn-Tucker point of~\eqref{pr1} can be obtained until convergence~\cite[Theorem~3]{myp1}.

In the above procedure, a curial problem is whether~\eqref{pr3} can be efficiently solved, which is contingent upon the property of the EAR function $\tilde \Gamma$. Therefore, in this section, we further analyze the property of the EAR function. Combined with the property analysis, we also enumerate some applications for the recursion method and show the simulation results.

\subsection{Property Analysis}
To efficiently solve the transformed sub-problems, we analyze the monotonicity and the convexity of the EAR function in the following proposition.
\begin{proposition}
It holds true that

1. The EAR function is strictly monotony w.r.t. $N$ and $\varepsilon$.

2. The EAR function is convex w.r.t. $N$ and $\varepsilon$, respectively.

3. The EAR function is jointly convex w.r.t. $N$ and $\varepsilon$ if the following inequality holds true,
\begin{equation}
\sqrt m  \le \frac{{q\sqrt {{{\gamma^*}}^2 + 2{{\gamma^*}}} }}{{(1 + {{\gamma^*}})\ln (1 + {{\gamma^*}})}},
\end{equation}
where $\gamma^*$ is given by
\begin{equation}
{q^2} = \frac{{\ln (1 + \gamma^*)}}{{\left( {{{\gamma^*}^2} + 2\gamma^*} \right)\left( {\left( {{{\gamma^*}^2} + 2\gamma^*} \right) - \ln (1 + \gamma^*)} \right)}}.
\end{equation}
\end{proposition}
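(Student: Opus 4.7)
The plan is to treat $\dot\gamma$ and $m$ as fixed positive constants and work with the EAR function in the reduced form $\tilde\Gamma(N,\varepsilon)=\exp(\psi(N,\varepsilon))-1$, where $\psi(N,\varepsilon)=(aN+\alpha q(\varepsilon))/(1-\beta q(\varepsilon))$ with $a=\ln 2/m$, $\alpha=\mu(\dot\gamma)/\sqrt m$, $\beta=\rho(\dot\gamma)/\sqrt m$, and $q(\varepsilon)=Q^{-1}(\varepsilon)$. All three coefficients $a,\alpha,\beta$ are positive, and the denominator $1-\beta q$ is positive on the feasible region by~\eqref{5a}, so every quantity below is well defined.

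For Part~1, monotonicity in $N$ is immediate because $\psi$ is affine in $N$ with positive slope $a/(1-\beta q)$. For $\varepsilon$, I would first note that $q=Q^{-1}(\varepsilon)$ is strictly decreasing on $(0,1/2)$, then differentiate $\psi$ in $q$ to get $\partial_q\psi=(\alpha+\beta aN)/(1-\beta q)^2>0$; chaining through $q(\varepsilon)$ yields strict monotonicity of $\tilde\Gamma$ in $\varepsilon$. For Part~2, convexity in $N$ is immediate because $\tilde\Gamma$ is the exponential of an affine function of $N$. For convexity in $\varepsilon$, I would use the standard identities $dq/d\varepsilon=-1/\phi(q)$ and, via $\phi'(y)=-y\phi(y)$, $d^2q/d\varepsilon^2=q/\phi(q)^2>0$; combined with $\partial^2_q\psi=2\beta(\alpha+\beta aN)/(1-\beta q)^3>0$, the standard composition rule shows that $\psi(\varepsilon)$ is convex (an increasing convex function of a decreasing convex function), whence $\tilde\Gamma=e^\psi-1$ inherits convexity.

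For Part~3, I would compute the $2\times 2$ Hessian $\nabla^2\tilde\Gamma=e^\psi[\nabla\psi\,\nabla\psi^{\top}+\nabla^2\psi]$. Since $\psi_{NN}=0$, the diagonal entry $\tilde\Gamma_{NN}=e^\psi\psi_N^2\ge 0$. After cancelling the common $e^{2\psi}$ and the rank-one contribution $\psi_N^2\psi_\varepsilon^2$, positive semidefiniteness of the Hessian reduces to the scalar inequality $\psi_N^2\psi_{\varepsilon\varepsilon}\ge 2\psi_N\psi_\varepsilon\psi_{N\varepsilon}+\psi_{N\varepsilon}^2$. Substituting the closed-form expressions for $\psi_N$, $\psi_\varepsilon$, $\psi_{N\varepsilon}$, and $\psi_{\varepsilon\varepsilon}$ (the last one requiring the formula for $d^2q/d\varepsilon^2$) and then using~\eqref{dss} to trade $N$ for the SNR variable $\gamma$, the inequality becomes a one-parameter condition in $\gamma$, with $m$ and $q$ as data.

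The main obstacle is to identify the tightest point of this one-parameter inequality over the feasible range of $\gamma$. I expect the worst case to be a stationary point $\gamma^*$ of the resulting expression; setting its $\gamma$-derivative to zero should reproduce the defining equation $q^2({\gamma^*}^2+2\gamma^*)(({\gamma^*}^2+2\gamma^*)-\ln(1+\gamma^*))=\ln(1+\gamma^*)$ quoted in the proposition. Substituting $\gamma^*$ back into the Hessian inequality should collapse it to $\sqrt m\le q\sqrt{{\gamma^*}^2+2\gamma^*}/((1+\gamma^*)\ln(1+\gamma^*))$, the displayed sufficient bound. The algebraic manipulation in this last step is where the real work lies; additional care will be needed to certify that $\gamma^*$ is truly the minimiser (rather than a saddle) of the expression, which I would verify by a sign check on the second derivative or by a monotonicity argument on each side of $\gamma^*$.
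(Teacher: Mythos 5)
Your Parts 1 and 2 are correct and land where the paper does: the paper simply writes out the first- and second-order partial derivatives and reads off their signs, whereas your composition argument (an increasing convex $\psi(\cdot)$ in $q$ composed with the decreasing convex $q(\varepsilon)$) is a somewhat cleaner route to the same conclusions. Your reduction of joint convexity to the scalar condition $\psi_N^2\psi_{\varepsilon\varepsilon}\ge 2\psi_N\psi_\varepsilon\psi_{N\varepsilon}+\psi_{N\varepsilon}^2$ is also sound and matches the sign computation in \eqref{4}.

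The gap is in how you plan to finish Part 3. You expect $\gamma^*$ to arise as a stationary point of the one-parameter determinant expression, with its defining equation obtained by setting a $\gamma$-derivative to zero and the $\sqrt m$ bound obtained by substituting $\gamma^*$ back in. That is not the mechanism, and a stationarity search will not reproduce the stated equation. What actually happens is the following. The determinant has the sign of $q\sqrt m\left(\mu(\dot\gamma)+\frac{N}{m}\rho(\dot\gamma)\ln 2\right)-\rho^2(\dot\gamma)$. One lower-bounds it by dropping the positive $N$-term and replacing $q\sqrt m$ via the feasibility identity $\sqrt m = q\sqrt{\bar\gamma^2+2\bar\gamma}\big/\big((1+\bar\gamma)\ln(1+\bar\gamma)\big)$ together with $\dot\gamma\ge\bar\gamma$ and the fact that $x\mapsto\sqrt{x^2+2x}\big/\big((1+x)\ln(1+x)\big)$ is strictly decreasing; this is the step that eliminates both $N$ and $m$, and it is entirely absent from your plan. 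The resulting lower bound is nonnegative \emph{if and only if} $q^2\ge g_3(\dot\gamma)$, where $g_3$ is strictly decreasing, so the admissible set is the half-line $\dot\gamma\ge\gamma^*$ with $\gamma^*$ defined by the level-crossing $q^2=g_3(\gamma^*)$ --- a threshold of a monotone function, not a critical point. Finally, the displayed bound on $\sqrt m$ is not the Hessian inequality evaluated at $\gamma^*$: it is an independently imposed sufficient condition which, sandwiched against $q\sqrt{\dot\gamma^2+2\dot\gamma}\big/\big((1+\dot\gamma)\ln(1+\dot\gamma)\big)\le\sqrt m$ through the same monotone map, guarantees $\dot\gamma\ge\gamma^*$. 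Without the feasibility relation tying $m$, $q$ and $\bar\gamma$ together, and without this two-step logic (monotone threshold first, then a constraint on $m$ to keep the iterate above it), the "worst case at a stationary point, substitute back" plan cannot be carried through.
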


\begin{proof}
We first prove Proposition 1.1 as follows. The first-order derivatives of the EAR function are given as follows
\begin{equation}
\frac{{\partial \tilde \Gamma}}{{\partial N}} = K(\dot \gamma )\frac{{\frac{1}{m}\ln 2}}{{1 - \rho (\dot \gamma )b}},
\end{equation}
and
\begin{equation}
\frac{{\partial \tilde \Gamma}}{{\partial \varepsilon }} =  - K(\dot \gamma )\frac{{\mu (\dot \gamma ) + \frac{N}{m}\ln 2\rho (\dot \gamma )}}{{{{\left( {1 - \rho (\dot \gamma )b} \right)}^2}}}\frac{{\sqrt {2\pi } {e^{\frac{{{q^2}}}{2}}}}}{{\sqrt m }},
\end{equation}
where  $\exp \left( {\frac{{\frac{N}{m}\ln 2 + \mu (\dot \gamma )b}}{{1 - \rho (\dot \gamma )b}}} \right) \buildrel \Delta \over = K(\dot \gamma )$ for simplicity. Owing to the positivity of $\mu (\dot \gamma)$, $\rho (\dot \gamma)$, and $1-\rho (\dot \gamma)b$, the above two partial derivatives indicate that the EAR function is monotony w.r.t. $N$ and $\varepsilon$, which completes the proof of Proposition 1.1.

Then, Proposition 1.2 is proved as follows. The second-order derivatives of the EAR function are given as follows
\begin{equation}
\frac{{{\partial ^2}\tilde \Gamma}}{{\partial {N^2}}} = K(\dot \gamma ){\left( {\frac{{\frac{1}{m}\ln 2}}{{1 - \rho (\dot \gamma )b}}} \right)^2},
\end{equation}
and
\begin{equation}
\begin{split}
&\frac{{{\partial ^2}\tilde \Gamma}}{{\partial {\varepsilon ^2}}} =K(\dot \gamma ) \frac{{q2\pi {e^{{q^2}}}}}{{\sqrt m }}\frac{{\mu (\dot \gamma ) + \frac{N}{m}\ln 2\rho (\dot \gamma )}}{{{{\left( {1 - \rho (\dot \gamma )b} \right)}^2}}}\\
&+ K(\dot \gamma ){\left( {\frac{{\left( {\mu (\dot \gamma ) + \frac{N}{m}\ln 2\rho (\dot \gamma )} \right)}}{{{{\left( {1 - \rho (\dot \gamma )b} \right)}^2}}}\frac{{\sqrt {2\pi } {e^{\frac{{{q^2}}}{2}}}}}{{\sqrt m }}} \right)^2}\\
&+K(\dot \gamma ) \frac{{2\pi {e^{{q^2}}}}}{m}\frac{{2\rho (\dot \gamma )\left( {\mu (\dot \gamma ) + \frac{N}{m}\ln 2\rho (\dot \gamma )} \right)}}{{{{\left( {1 - \rho (\dot \gamma )b} \right)}^3}}},
\end{split}
\end{equation}
and
\begin{equation}
\begin{split}
&\frac{{\partial^2 {\tilde \Gamma}}}{{\partial \varepsilon \partial N}} =  - K(\dot \gamma )\frac{{\frac{1}{m}\ln 2\rho (\dot \gamma )}}{{{{\left( {1 - \rho (\dot \gamma )b} \right)}^2}}}\frac{{\sqrt {2\pi } {e^{\frac{{{q^2}}}{2}}}}}{{\sqrt m }}\\
& - K(\dot \gamma )\frac{{\mu (\dot \gamma ) + \frac{N}{m}\ln 2\rho (\dot \gamma )}}{{{{\left( {1 - \rho (\dot \gamma )b} \right)}^3}}}\frac{{\sqrt {2\pi } {e^{\frac{{{q^2}}}{2}}}\ln 2}}{{m\sqrt m }}.
\end{split}
\end{equation}

\newcounter{mytempeqncnt8}
\begin{figure*}[!t]
\normalsize
\vspace{-5pt}
\begin{align}
&\frac{{{\partial ^2}\tilde \Gamma}}{{\partial {N^2}}}\frac{{{\partial ^2}\tilde \Gamma}}{{\partial {\varepsilon ^2}}} - {\left( {\frac{{\partial ^2}\tilde \Gamma}{{\partial \varepsilon \partial N}}} \right)^2} \buildrel\textstyle.\over= q\sqrt m \left( {\mu (\dot \gamma ) + \frac{N}{m}\rho (\dot \gamma )\ln 2} \right) - {\rho ^2}(\dot \gamma ) \label{4}\\
&= q\sqrt m \left( { - \frac{1}{{\sqrt {{{\dot \gamma}^2} + 2\dot \gamma} }}\frac{1}{{(1 + \dot \gamma)}}\ln (1 + \dot \gamma) + \sqrt {1 - \frac{1}{{{{(1 + \dot \gamma)}^2}}}}  + \frac{{\frac{N}{m}\ln 2}}{{(1 + \dot \gamma)\sqrt {{{\dot \gamma}^2} + 2\dot \gamma} }}} \right) - \frac{1}{{{{(1 + \dot \gamma)}^2}\left( {{{\dot \gamma}^2} + 2\dot \gamma} \right)}} \tag{\ref{4}{a}} \label{4a}\\
&\ge \frac{{{q^2}\sqrt {{{\dot \gamma}^2} + 2\dot \gamma} }}{{\ln (1 + \dot \gamma)(1 + \dot \gamma)}}\left( { - \frac{1}{{\sqrt {{{\dot \gamma}^2} + 2\dot \gamma} }}\frac{1}{{(1 + \dot \gamma)}}\ln (1 + \dot \gamma) + \sqrt {1 - \frac{1}{{{{(1 + \dot \gamma)}^2}}}} } \right) - \frac{1}{{{{(1 + \dot \gamma)}^2}\left( {{{\dot \gamma}^2} + 2\dot \gamma} \right)}} \triangleq g_2(\dot \gamma) \tag{\ref{4}{b}} \label{4b}.
\end{align}
\hrulefill
\vspace{-15pt}
\end{figure*}
\vspace{-5pt}
Since $\mu (\dot \gamma)$, $\rho (\dot \gamma)$, and $1-\rho (\dot \gamma)b$ are all positive for $\dot \gamma>0$, both $\frac{{{\partial ^2}\tilde \Gamma}}{{\partial {N^2}}}$ and $\frac{{{\partial ^2}\tilde \Gamma}}{{\partial {\varepsilon ^2}}}$ are positive, which completes the proof of Proposition 1.2.

To prove Proposition 1.3, we first derive the lower bound of $\frac{{{\partial ^2}\tilde \Gamma}}{{\partial {N^2}}}\frac{{{\partial ^2}\tilde \Gamma}}{{\partial {\varepsilon ^2}}} - {\left( {\frac{{\partial ^2}\tilde \Gamma}{{\partial \varepsilon \partial N}}} \right)^2}$ in~\eqref{4}, which is on the top of next page. In~\eqref{4}, $A\buildrel\textstyle.\over= B$ denotes $A$ and $B$ have the same sign;~\eqref{4b} holds for the following thee reasons. First, we have $\sqrt m = \frac{{q\sqrt {{{\bar \gamma }^2} + 2\bar \gamma } }}{{\left( {1 + \bar \gamma } \right)\ln (1 + \bar \gamma )}}$. Second, it can be proved that $\frac{{\sqrt {{{ x}^2} + 2 x} }}{{\ln (1 +  x)(1 +  x)}}$ is strictly decreasing w.r.t. $x$ for $x>0$. Third, we have $\dot \gamma \geq \bar \gamma$ and $N>0$.

To prove $g_2(\dot \gamma)\geq0$, the following inequalities should hold,
\begin{equation}
\begin{split}
&{q^2}\left( {\frac{{\left( {{{\dot \gamma}^2} + 2\dot \gamma} \right) - \ln (1 + \dot \gamma)}}{{\ln (1 + \dot \gamma){{(1 + \dot \gamma)}^2}}}} \right) \ge \frac{1}{{{{(1 + \dot \gamma)}^2}\left( {{{\dot \gamma}^2} + 2\dot \gamma} \right)}}\\
&  \Leftrightarrow {q^2} \ge \frac{{\ln (1 + \dot \gamma)}}{{\left( {{{\dot \gamma}^2} + 2\dot \gamma} \right)\left( {\left( {{{\dot \gamma}^2} + 2\dot \gamma} \right) - \ln (1 + \dot \gamma)} \right)}}\triangleq g_3(\dot \gamma).
\end{split}
\end{equation}
It follows from the above that the EAR function is jointly convex w.r.t. $N$ and $\varepsilon$ if $q^2 \geq g_3(\dot \gamma)$. Then, we investigate the property of $g_3(\dot \gamma)$ as follows. The first-order derivative of $g_3(\dot \gamma)$ is given by
\begin{equation}
\begin{split}
&g_3'(\dot \gamma) \buildrel\textstyle.\over= \left( {{{\dot \gamma}^2} + 2\dot \gamma} \right)\left( {\left( {{{\dot \gamma}^2} + 2\dot \gamma} \right) - \ln (1 + \dot \gamma)} \right)\\
& - 2{\left( {\dot \gamma + 1} \right)^2}\left( {\left( {{{\dot \gamma}^2} + 2\dot \gamma} \right) - \ln (1 + \dot \gamma)} \right)\ln (1 + \dot \gamma)\\
& - \left( {2{{\left( {\dot \gamma + 1} \right)}^2} - 1} \right)\left( {{{\dot \gamma}^2} + 2\dot \gamma} \right)\ln (1 + \dot \gamma)\\
& = 2{\left( {\dot \gamma + 1} \right)^2}\ln (1 + \dot \gamma)\left( {\underbrace {\ln (1 + \dot \gamma) - \left( {{{\dot \gamma}^2} + 2\dot \gamma} \right)}_{{g_4(\dot \gamma)} < 0}} \right)\\
& + \left( {{{\dot \gamma}^2} + 2\dot \gamma} \right)\left( {\underbrace {\left( {{{\dot \gamma}^2} + 2\dot \gamma} \right) - 2{{\left( {\dot \gamma + 1} \right)}^2}\ln (1 + \dot \gamma)}_{{g_5(\dot \gamma)} < 0}} \right),
\end{split}
\end{equation}
which indicates that $g_3(\dot \gamma)$ is strictly decreasing, where the negativity of $g_4(\dot \gamma)$ and $g_5(\dot \gamma)$ can be checked by analyzing their first-order derivatives. For simplicity, we denote by $ {\gamma^*}$ the solution of $q^2 = g_3(\dot \gamma)$. If $\dot \gamma \geq {\gamma^*}$, then the EAR function is jointly convex w.r.t. $N$ and $\varepsilon$.

To ensure $\dot \gamma \geq {\gamma^*}$, one can further impose additional constraints on $m$ based on the following inequality
\begin{equation}
\sqrt m  \le \frac{{q\sqrt {{{\gamma^*}}^2 + 2{{\gamma^*}}} }}{{\ln (1 + {{\gamma^*}})(1 + {{\gamma^*}})}}, \label{21t}
\end{equation}
where the reasons are described as follows. As similar as the derivation from~\eqref{4a} to~\eqref{4b}, we have $\frac{{q\sqrt {{{\dot \gamma}^2} + 2\dot \gamma} }}{{\ln (1 + \dot \gamma)(1 + \dot \gamma)}} \le \sqrt m $. Combining the above inequality and~\eqref{21t}, we have
\begin{equation}
\frac{{q\sqrt {{{\dot \gamma}}^2 + 2{{\dot \gamma}}} }}{{\ln (1 + {{\dot \gamma}})(1 + {{\dot \gamma}})}} \le \sqrt m  \le \frac{{q\sqrt {{{\gamma^*}}^2 + 2{{\gamma^*}}} }}{{\ln (1 + {{\gamma^*}})(1 + {{\gamma^*}})}}, \label{22t}
\end{equation}
Since $\frac{{\sqrt {{{ x}^2} + 2 x} }}{{\ln (1 +  x)(1 +  x)}}$ is strictly decreasing w.r.t. $x$ for $x>0$, $\dot \gamma \geq  {\gamma^*}$ can be derived from~\eqref{22t}. This completes the proof of Proposition 1.3.
\end{proof}

Based on Proposition~1.3, we then discuss the convexity of the EAR function for some typical SPT configurations. According to the current standard~\cite{ref31e}, the BLER threshold is normally set to $\varepsilon= 10^{-5}$, yielding $q\approx4.26$ and ${\gamma^*}\approx 0.025$. In this case,~\eqref{21t} is equivalent to $\sqrt m\leq 37.8705$, which is normally satisfied in SPT. Note that $\frac{{\sqrt {{{\gamma^*}}^2 + 2{{\gamma^*}}} }}{{\ln (1 + {{\gamma^*}})(1 + {{\gamma^*}})}}$ is strictly decreasing w.r.t. $\gamma^*$, and $\gamma^*$ is strictly decreasing w.r.t. $q$. Therefore, both $\frac{{\sqrt {{{\gamma^*}}^2 + 2{{\gamma^*}}} }}{{\ln (1 + {{\gamma^*}})(1 + {{\gamma^*}})}}$ and $\frac{{q\sqrt {{{\gamma^*}}^2 + 2{{\gamma^*}}} }}{{\ln (1 + {{\gamma^*}})(1 + {{\gamma^*}})}}$ are strictly increasing w.r.t. $q$. It means that if the BLER requirement is more stringent, e.g. another typical BLER requirement in the current standard is $\varepsilon \leq 10^{-9}$ ($q\approx6$)~\cite{ref31e}, then~\eqref{21t} will be further relaxed, and more likely to be satisfied in SPT.
\vspace{-10pt}
\subsection{Applications}
In the pervious subsection, we have proved that the EAR function is convex w.r.t. $N$ and $\varepsilon$. In this subsection, we further enumerate some applications of using the recursion method to optimize them. For simplicity, we denote $\tilde \Gamma(N_i,m_i,\varepsilon_i,\dot \gamma_i)$ by $\tilde \Gamma_i$ in this subsection.


\subsubsection{Weighted Sum Rate Maximization for Multi-User Orthogonal Multiple Access}

We consider a multi-user orthogonal multiple access case in which the packets are delivered to multiple users, orthogonally. In accordance with~\eqref{pr3}, the corresponding weighted sum rate maximization problem can be formulated as follows
\begin{equation}
\begin{split}
&\mathop {\max }\limits_{{N_i}} \sum {{\alpha _i}{N_i}}\\
&\;{\rm{s}}.{\rm{t}}.\,\;{\mkern 1mu} \sum {\frac{{m_i{{\tilde \Gamma }_i}}}{{{h_i}}}}  \le {P_{\max }},
\end{split}
\end{equation}
where $B_i,B \in \{ \alpha ,N,m,h,\varepsilon \}$ denotes the corresponding $B$ in $i$-th orthogonal transmission, $\alpha$ denotes the weighted coefficient, $\varepsilon_i = \varepsilon_{\rm th}$ is the BLER threshold, and ${P_{\max }}$ denotes the maximum transmit power threshold at transmitter. Besides, $h$ denotes the normalized channel gain (normalized by additive complex white Gaussian noise). Therefore, $m_i{\tilde \Gamma }_i/h_i$ denotes the consumed power for $i$-th orthogonal transmission.

According to Proposition~1.2, the convexity of the above problem can be easily checked. Therefore, the above problem can be efficiently solved by the convex optimization algorithms, such as the prime-dual inner-point method~\cite{math2}.

\subsubsection{Power Consumption Minimization for Multi-Hop Relaying}
We consider a typical multi-hop relay transmission case, in which the source node packet is delivered to the destination node via a series of relay nodes. Assuming that blocklength and transmission size are given, i.e., $N_i=N$ and $m_i = m$, the following power consumption minimization problem can be formulated as per~\eqref{pr3},
\begin{equation}
\begin{split}
&\mathop {\min }\limits_{{\varepsilon _i}} \sum {\frac{{m{{\tilde \Gamma }_i}}}{{{h_i}}}}   \\
&\;{\rm{s}}.{\rm{t}}.{\mkern 1mu} \;\prod {(1 - {\varepsilon _i}) \ge 1 - {\varepsilon _{{\rm{th}}}}} ,
\end{split}
\end{equation}
where $C_i,C \in \{ N,m,h,\varepsilon \}$ denotes the corresponding $C$ in $i$-th hop transmission.

By approximating $\prod {(1 - {\varepsilon _i})} $ as $1 - \sum {{\varepsilon _i}}$~\cite{p2}, the above problem can be converted to a convex problem, where the convexity of the objective function can be easily checked according to Proposition~1.2. Therefore, the above problem can be efficiently solved as well.

\subsubsection{Energy Efficiency Maximization for Two-Hop Relaying}
A similar problem to the power consumption minimization problem is the energy efficiency (EE) maximization problem with the spectrum efficiency (SE) constraint. In a two-hop relay transmission, the EE maximization problem can be formulated as follows
\begin{equation}
\begin{split}
&\mathop {\max }\limits_{{N},{\varepsilon _1}} \frac{N}{ {\frac{{m{{\tilde \Gamma }(N,m,\varepsilon_1,\dot \gamma_1)}}}{{{h_1}}}}+{\frac{{m{{\tilde \Gamma }(N,m,\varepsilon_{\rm th}-\varepsilon _1,\dot \gamma_2)}}}{{{h_2}}}}}\\
&\;{\rm{s}}.{\rm{t}}.\;\frac{N}{{m} } \ge {\phi _{{\rm{th}}}}, \label{24f}
\end{split}
\end{equation}
where ${\phi _{{\rm{th}}}}$ is the threshold SE and $\varepsilon _2$ is replaced by $\varepsilon_{\rm th}-\varepsilon _1$ according to the approximate reliability condition\footnote{It is easy to prove that the condition holds with equality at the optimal solution, relying on Proposition 1.1.}, i.e., $1-(1 - {\varepsilon _1})(1 - {\varepsilon _2}) \approx 1 - {{\varepsilon _1}}-\varepsilon _2\ge 1 - {\varepsilon _{{\rm{th}}}}$. According to Proposition~1.3, the above problem is a concave-convex fractional programming problem for typical SPT configurations, e.g., $\sqrt m <37.8705$ and $\varepsilon_{\rm th}<10^{-5}$. Therefore, it can be converted to a convex problem via the Dinkelbach's transform~\cite{math5}, and solved by the gradient-based methods.

\begin{remark}
In the above three applications, the transmissions are all orthogonal. It is also worth noting that in some particular non-orthogonal transmissions, e.g., power domain non-orthogonal multiple access, the transmit power of each user can be expressed as the product of SNR. Since the SNR is approximated as an exp-function in the recursion method, the product of SNR is still an exp-function, which renders the problem still tractable. Please refer to~\cite{myp1} for an example of this.
\end{remark}
\vspace{-10pt}

\subsection{Simulation Results}
To further demonstrate the superiority of the recursion method, we further compare the solutions of applying the proposed recursion method with the optimal solution of the three application cases in Figs.~\ref{1}--\ref{3}. The results show that for all the three application cases, our proposed recursion based algorithm can almost achieve the optimal solution.

 \begin{figure}
\vspace{-12pt}
\centering
     \includegraphics[width=0.45\textwidth]{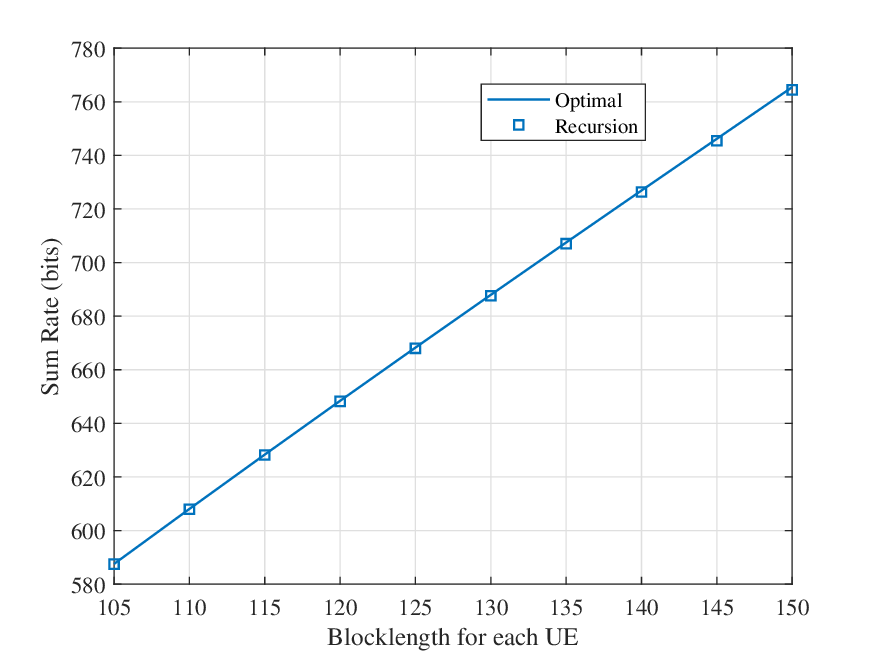}
     \vspace{-12pt}
     \caption{Performance comparisons between the recursion method and exhaustive search for weighted sum rate application, when $\varepsilon_i = {10^{ - 5}}$, $P_{\rm max}=0.0002$~W, and ${\alpha_i}=1$.}

     \label{1}
\end{figure}

 \begin{figure}
\vspace{-12pt}
\centering
     \includegraphics[width=0.45\textwidth]{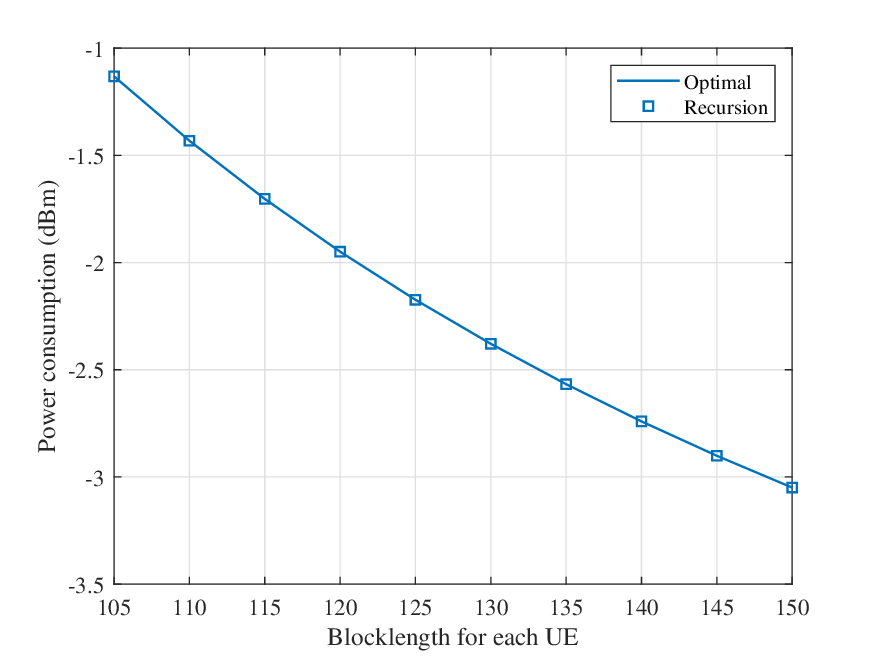}
     \vspace{-12pt}
     \caption{Performance comparisons between the recursion method and exhaustive search for power consumption minimization application, when $N_1=N_2=320$~bits and ${\varepsilon _{\rm th}} = {10^{ - 5}}$.}

     \label{2}
\end{figure}

 \begin{figure}
\vspace{-12pt}
\centering
     \includegraphics[width=0.45\textwidth]{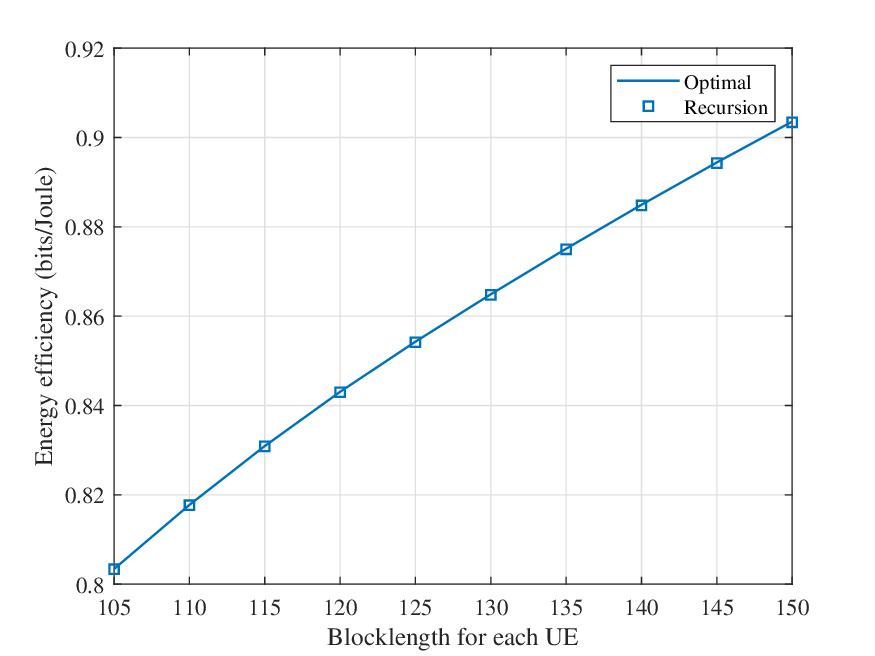}
     \vspace{-12pt}
     \caption{Performance comparisons between the optimization algorithms and exhaustive search for energy efficiency maximization application, when ${\varepsilon _{\rm th}} = {10^{ - 5}}$.}

     \label{3}
\end{figure}
The simulation scenario and the remaining setups are as follows. We consider a simplified downlink $2$-UE scenario for all the three applications. For weighted sum rate application, the two UEs are $20$ and $80$~m away from base station, respectively; For the other two applications, the relay UE is $20$ and $80$~m away from base station and the other UE, respectively. It is assumed that the channel gain is only determined by the path loss to fairly compare different methods. The path loss model is ${PL_{\rm dB}{\rm{ = 32}}{\rm{.4 + 23}}{\rm{lo}}{{\rm{g}}_{10}}(d) + 23{\log _{10}}({f_{\rm{c}}})}$, where ${d}$ is the distance and $f_{\rm{c}}=6$~GHz is the carrier frequency. The frequency bandwidth of each sub-carrier is set to $60$ kHz and the noise power spectral density is set to $- 174$ dBm/Hz. On this basis, the normalized channel gain $h_i$ can be obtained.
\section{Conclusion}

In this paper, we have introduced an exponential-approximation-based recursion method for determining the SNR in SPT. Then, we have proved that the recursion method has quadratic convergence rate. Furthermore, we have proved that the EAR function is monotony and jointly convex w.r.t. the packet size and BLER for typical SPT configurations. Finally, we have enumerated some applications for the recursion method. Simulation results showed that the recursion method converges faster than the other SNR determination methods. Besides, the results also showed that the recursion-based methods can almost achieve the optimal solution of the application cases.

\vspace{-10pt}

\end{document}